\documentclass[11pt]{article}
\usepackage{times}
\usepackage[paperwidth=199.8mm,
paperheight=297mm,centering,hmargin=20mm,vmargin=20mm]{geometry}
\usepackage{authblk} % package for author list
\usepackage[bottom]{footmisc} % footnote

\usepackage[titletoc,title]{appendix} % appendix
\usepackage{changepage}

\usepackage{hyperref}
\hypersetup{colorlinks=true,citecolor=myblue,linkcolor=myblue,
filecolor=myblue,urlcolor=myblue,breaklinks=true}
\usepackage{url}
\usepackage[dvipsnames]{xcolor}
\usepackage{color}
\usepackage{framed}

\definecolor{shadecolor}{rgb}{0.9,0.9,0.9}
\definecolor{mylightgray}{RGB}{100,100,100}

\definecolor{myblue}{RGB}{0, 68, 116}
\definecolor{mycyan}{RGB}{0, 97, 91}
\definecolor{mygreen}{RGB}{2, 102, 1}
\definecolor{myorange}{RGB}{240, 102, 0}
\definecolor{myred}{RGB}{172, 23, 0}
\definecolor{mymagenta}{RGB}{140,16,73}

\usepackage{mathtools}
\usepackage{amsmath}
\usepackage{amssymb}
\usepackage[shortlabels]{enumitem}
\usepackage{graphicx,epsfig,latexsym,verbatim}
\usepackage{dsfont}
\usepackage{mathrsfs}

\usepackage{subcaption}
\usepackage{caption}
\usepackage{float}
\usepackage{tikz}
\usetikzlibrary{arrows.meta,positioning,calc,fit,matrix,backgrounds}
\usepackage{relsize}
\usepackage{pgfplots}
\pgfplotsset{compat=1.18}

\usepackage{amsthm}
\usepackage{tcolorbox}
\tcbuselibrary{skins}
\tcbuselibrary{breakable}

\definecolor{myblue}{RGB}{0,68,116}
\definecolor{mycyan}{RGB}{0,97,91}
\definecolor{mygreen}{RGB}{2,102,1}
\definecolor{myorange}{RGB}{240,102,0}
\definecolor{myred}{RGB}{172,23,0}

\newtheorem{theorem}{Theorem}

\newtheorem{corollary}[theorem]{Corollary}

\newtheorem{boxedtheorem}[theorem]{Theorem}
\newtheorem{boxedlemma}[theorem]{Lemma}
\newtheorem{boxedprop}[theorem]{Proposition}
\newtheorem{boxedcorollary}[theorem]{Corollary}
\newtheorem{boxeddefinition}[theorem]{Definition}
\newtheorem{boxedassumption}[theorem]{Assumption}
\newtheorem{boxedremark}[theorem]{Remark}

\newtheorem{boxedexample}[theorem]{Example}

\newtheorem{remark}[theorem]{Remark}

\definecolor{theoremframe}{RGB}{31,78,121}
\definecolor{theoremback}{RGB}{247,250,253}
\definecolor{lemmaframe}{RGB}{35,105,86}
\definecolor{lemmaback}{RGB}{247,252,250}
\definecolor{propframe}{RGB}{102,73,139}
\definecolor{propback}{RGB}{251,249,253}
\definecolor{corollaryframe}{RGB}{165,96,18}
\definecolor{corollaryback}{RGB}{254,251,246}
\definecolor{definitionframe}{RGB}{112,112,112}
\definecolor{definitionback}{RGB}{248,248,248}

\tcbset{
  supplement result/.style={
    enhanced jigsaw,
    breakable,
    lines before break=5,
    sharp corners,
    boxrule=0.55pt,
    left=6pt,
    right=6pt,
    oversize=0pt,
    top=6pt,
    bottom=6pt,
    before skip=9pt plus 2pt minus 1pt,
    after skip=9pt plus 2pt minus 1pt
  }
}
\tcolorboxenvironment{boxedtheorem}{
  supplement result,
  colframe=myblue,
  colback=myblue!5!white
}
\tcolorboxenvironment{boxedlemma}{
  supplement result,
  colframe=definitionframe,
  colback=definitionback
}
\tcolorboxenvironment{boxedprop}{
  supplement result,
  colframe=definitionframe,
  colback=definitionback
}
\tcolorboxenvironment{boxedcorollary}{
  supplement result,
  colframe=definitionframe,
  colback=definitionback
}
\tcolorboxenvironment{boxeddefinition}{
  supplement result,
  colframe=definitionframe,
  colback=definitionback
}
\tcolorboxenvironment{boxedassumption}{
  supplement result,
  colframe=definitionframe,
  colback=definitionback
}
\tcolorboxenvironment{boxedremark}{
  supplement result,
  colframe=myorange!70!black,
  colback=myorange!6!white
}
\tcolorboxenvironment{boxmessage}{
  supplement result,
  colframe=myorange!70!black,
  colback=myorange!6!white
}
\newtcolorbox{boxnote}{
  supplement result,
  colframe=myorange!70!black,
  colback=myorange!6!white
}
\tcolorboxenvironment{boxedexample}{
  supplement result,
  colframe=definitionframe,
  colback=definitionback
}

\newenvironment{boxtheorem}[1][]
  {\begin{boxedtheorem}[#1]}
  {\end{boxedtheorem}}
\newenvironment{boxlemma}[1][]
  {\begin{boxedlemma}[#1]}
  {\end{boxedlemma}}
\newenvironment{boxproposition}[1][]
  {\begin{boxedprop}[#1]}
  {\end{boxedprop}}

\usepackage{colortbl}
\usepackage{pifont} % http://ctan.org/pkg/pifont
\usepackage{booktabs}
\usepackage{makecell} %To keep spacing of text in tables
\usepackage{diagbox}
\usepackage{multirow}

\newcommand{\nc}{\newcommand}
\nc{\rnc}{\renewcommand}
\nc{\<}{\langle}
\rnc{\>}{\rangle}
\nc{\bra}[1]{\langle#1|}
\nc{\ket}[1]{|#1\rangle}
\nc{\ketbra}[2]{|#1\rangle\!\langle#2|}
\nc{\braket}[2]{\langle#1|#2\rangle}
\nc{\braandket}[3]{\langle #1|#2|#3\rangle}
\nc{\proj}[1]{| #1\rangle\!\langle #1 |}
\nc{\avg}[1]{\langle#1\rangle}
\nc{\vket}[1]{\lvert#1\rangle\!\rangle}
\nc{\vbra}[1]{\langle\!\langle#1\rvert}

\nc{\rank}{\operatorname{Rank}}
\nc{\id}{{\operatorname{id}}}
\nc{\iid}{{\operatorname{iid}}}
\nc{\supp}{{\operatorname{supp}}}
\nc{\smfrac}[2]{\mbox{$\frac{#1}{#2}$}}
\nc{\tr}{\operatorname{Tr}}
\nc{\ox}{\otimes}
\nc{\floor}[1]{\lfloor #1 \rfloor}
\nc{\trans}{\mathsf T}
\nc{\img}{\mathbf{i}}

\makeatletter

\newcommand{\solidgamearrowpicture@}[3]{%
    \tikz[x=1ex,y=1ex,line cap=round,line join=round]
      \draw[line width=#3,-{Triangle[length=0.64ex,width=0.68ex]}] (0,0) -- (#1,#2);%
}
\newcommand{\opengamearrowpicture@}[3]{%
    \tikz[x=1ex,y=1ex,line cap=round,line join=round]
      \draw[line width=#3,-{Triangle[open,length=0.64ex,width=0.68ex]}] (0,0) -- (#1,#2);%
}
\newsavebox{\solidgameuparrow@display}
\newsavebox{\solidgameuparrow@text}
\newsavebox{\solidgameuparrow@script}
\newsavebox{\solidgameuparrow@scriptscript}
\newsavebox{\solidgamedownarrow@display}
\newsavebox{\solidgamedownarrow@text}
\newsavebox{\solidgamedownarrow@script}
\newsavebox{\solidgamedownarrow@scriptscript}
\newsavebox{\opengameuparrow@display}
\newsavebox{\opengameuparrow@text}
\newsavebox{\opengameuparrow@script}
\newsavebox{\opengameuparrow@scriptscript}
\newsavebox{\opengamedownarrow@display}
\newsavebox{\opengamedownarrow@text}
\newsavebox{\opengamedownarrow@script}
\newsavebox{\opengamedownarrow@scriptscript}
\AtBeginDocument{%
  \sbox{\solidgameuparrow@display}{\solidgamearrowpicture@{0}{1.62}{0.58pt}}%
  \sbox{\solidgameuparrow@text}{\solidgamearrowpicture@{0}{1.50}{0.52pt}}%
  \sbox{\solidgameuparrow@script}{\solidgamearrowpicture@{0}{1.30}{0.44pt}}%
  \sbox{\solidgameuparrow@scriptscript}{\solidgamearrowpicture@{0}{1.12}{0.38pt}}%
  \sbox{\solidgamedownarrow@display}{\solidgamearrowpicture@{0}{-1.62}{0.58pt}}%
  \sbox{\solidgamedownarrow@text}{\solidgamearrowpicture@{0}{-1.50}{0.52pt}}%
  \sbox{\solidgamedownarrow@script}{\solidgamearrowpicture@{0}{-1.30}{0.44pt}}%
  \sbox{\solidgamedownarrow@scriptscript}{\solidgamearrowpicture@{0}{-1.12}{0.38pt}}%
  \sbox{\opengameuparrow@display}{\opengamearrowpicture@{0}{1.62}{0.58pt}}%
  \sbox{\opengameuparrow@text}{\opengamearrowpicture@{0}{1.50}{0.52pt}}%
  \sbox{\opengameuparrow@script}{\opengamearrowpicture@{0}{1.30}{0.44pt}}%
  \sbox{\opengameuparrow@scriptscript}{\opengamearrowpicture@{0}{1.12}{0.38pt}}%
  \sbox{\opengamedownarrow@display}{\opengamearrowpicture@{0}{-1.62}{0.58pt}}%
  \sbox{\opengamedownarrow@text}{\opengamearrowpicture@{0}{-1.50}{0.52pt}}%
  \sbox{\opengamedownarrow@script}{\opengamearrowpicture@{0}{-1.30}{0.44pt}}%
  \sbox{\opengamedownarrow@scriptscript}{\opengamearrowpicture@{0}{-1.12}{0.38pt}}%
}
\newcommand{\gamearrowbox@}[1]{%
  \mathrel{\vcenter{\hbox{%
    \usebox{#1}%
  }}}%
}
\DeclareRobustCommand{\soliduparrow}{%
  \mathchoice
    {\gamearrowbox@{\solidgameuparrow@display}}%
    {\gamearrowbox@{\solidgameuparrow@text}}%
    {\gamearrowbox@{\solidgameuparrow@script}}%
    {\gamearrowbox@{\solidgameuparrow@scriptscript}}%
}
\DeclareRobustCommand{\soliddownarrow}{%
  \mathchoice
    {\gamearrowbox@{\solidgamedownarrow@display}}%
    {\gamearrowbox@{\solidgamedownarrow@text}}%
    {\gamearrowbox@{\solidgamedownarrow@script}}%
    {\gamearrowbox@{\solidgamedownarrow@scriptscript}}%
}
\DeclareRobustCommand{\emptyuparrow}{%
  \mathchoice
    {\gamearrowbox@{\opengameuparrow@display}}%
    {\gamearrowbox@{\opengameuparrow@text}}%
    {\gamearrowbox@{\opengameuparrow@script}}%
    {\gamearrowbox@{\opengameuparrow@scriptscript}}%
}
\DeclareRobustCommand{\emptydownarrow}{%
  \mathchoice
    {\gamearrowbox@{\opengamedownarrow@display}}%
    {\gamearrowbox@{\opengamedownarrow@text}}%
    {\gamearrowbox@{\opengamedownarrow@script}}%
    {\gamearrowbox@{\opengamedownarrow@scriptscript}}%
}

\makeatother

\nc{\cA}{{\cal A}}
\nc{\cB}{{\cal B}}
\nc{\cC}{{\cal C}}
\nc{\cD}{{\cal D}}
\nc{\cE}{{\cal E}}
\nc{\cF}{{\cal F}}
\nc{\cG}{{\cal G}}
\nc{\cH}{{\cal H}}
\nc{\cI}{{\cal I}}
\nc{\cJ}{{\cal J}}
\nc{\cK}{{\cal K}}
\nc{\cL}{{\cal L}}
\nc{\cM}{{\cal M}}
\nc{\cN}{{\cal N}}
\nc{\cO}{{\cal O}}
\nc{\cP}{{\cal P}}
\nc{\cQ}{{\cal Q}}
\nc{\cR}{{\cal R}}
\nc{\cS}{{\cal S}}
\nc{\cT}{{\cal T}}
\nc{\cV}{{\cal V}}
\nc{\cU}{{\cal U}}
\nc{\cX}{{\cal X}}
\nc{\cY}{{\cal Y}}
\nc{\cZ}{{\cal Z}}
\nc{\cW}{{\cal W}}

\nc{\bp}{\boldsymbol{p}}
\nc{\bq}{\boldsymbol{q}}
\nc{\brho}{\boldsymbol{\rho}}
\nc{\bsigma}{\boldsymbol{\sigma}}
\nc{\bomega}{\boldsymbol{\omega}}
\nc{\bmu}{\boldsymbol{\mu}}
\nc{\bT}{\boldsymbol{T}}
\nc{\bS}{\boldsymbol{S}}

\nc{\RR}{{{\mathbb R}}}
\nc{\CC}{{{\mathbb C}}}
\nc{\FF}{{{\mathbb F}}}
\nc{\NN}{{{\mathbb N}}}
\nc{\ZZ}{{{\mathbb Z}}}
\nc{\QQ}{{{\mathbb Q}}}
\nc{\UU}{{{\mathbb U}}}
\nc{\EE}{{{\mathbb E}}}

\nc{\bH}{{\mathfrak{H}}}

\nc{\sK}{{{\mathscr{K}}}}
\nc{\sS}{{{\mathscr{S}}}}
\nc{\sT}{{{\mathscr{T}}}}
\nc{\sA}{{{\mathscr{A}}}}
\nc{\sB}{{{\mathscr{B}}}}
\nc{\sC}{{{\mathscr{C}}}}
\nc{\sE}{{{\mathscr{E}}}}
\nc{\sL}{{{\mathscr{L}}}}
\nc{\sG}{{{\mathscr{G}}}}
\nc{\sF}{{{\mathscr{F}}}}
\nc{\sI}{{{\mathscr{I}}}}
\nc{\sN}{{{\mathscr{N}}}}
\nc{\sM}{{{\mathscr{M}}}}

\nc{\Choi}{Choi-Jamio\l{}kowski }
\nc{\reg}{\infty}
\nc{\mx}{\text{\rm mx}}
\nc{\amo}{\text{\rm amo}}
\nc{\Renyi}{R\'{e}nyi }

\nc{\conv}{\operatorname{conv}}
\nc{\cvxset}{\mathscr{C}}

\nc{\RM}{{{\mathscr{R}}}}

\nc{\END}{\operatorname{End}}
\nc{\PERM}{\mathfrak{\sigma}}

\nc{\Cone}{\text{\rm Cone}}
\nc{\sep}{{\SEP}}

\nc{\DD}{{{\mathbb D}}}
\nc{\BS}{{\scriptscriptstyle \rm {BS}}}
\nc{\Sand}{{\scriptscriptstyle  \rm S}}
\nc{\Petz}{{\scriptscriptstyle  \rm P}}
\nc{\Hypo}{{\scriptscriptstyle  \rm H}}
\nc{\Meas}{{\scriptscriptstyle \rm M}}
\nc{\Proj}{{{\scriptscriptstyle \rm P}}}
\nc{\KL}{{{\scriptscriptstyle \rm KL}}}
\nc{\IS}{{{\scriptscriptstyle \rm IS}}}

\nc{\suchthat}{\text{\rm s.t.}}

\nc{\pl}{{\scalebox{0.7}{+}}}
\nc{\HERM}{\mathscr{H}}
\nc{\PSD}{\HERM_{\pl}}
\nc{\PD}{\HERM_{\pl\pl}}
\nc{\density}{\mathscr{D}}
\nc{\subdensity}{\mathscr{D}_\bullet}

\nc{\polarPSD}[1]{{#1}_{\pl}^{\circ}}
\nc{\polarPSDre}[1]{{#1}_{\pl}^{\star}}
\nc{\polarPD}[1]{{#1}_{\pl\pl}^{\circ}}

\nc{\PPT}{\text{\rm PPT}}
\nc{\Rains}{\text{\rm Rains}}
\nc{\WD}{\text{\rm WD}}
\nc{\SEP}{\text{\rm SEP}}
\nc{\PSEP}{\text{\rm PSEP}}
\nc{\CPTP}{\text{\rm CPTP}}
\nc{\POVM}{\text{\rm POVM}}
\nc{\PVM}{\text{\rm PVM}}
\nc{\CP}{\text{\rm CP}}
\nc{\adv}{\text{\rm adv}}
\nc{\spec}{\text{\rm spec}}
\nc{\poly}{\text{\rm poly}}
\nc{\End}{\operatorname{End}}
\nc{\Par}{\operatorname{Par}}
\nc{\RNG}{\operatorname{RNG}}
\nc{\STAB}{\text{\rm STAB}}
\nc{\epi}{\boldsymbol{\operatorname{epi}}}
\nc{\op}{\boldsymbol{\operatorname{op}}}

\makeatletter
\newcommand*\rel@kern[1]{\kern#1\dimexpr\macc@kerna}
\newcommand*\widebar[1]{%
  \begingroup
  \def\mathaccent##1##2{%
    \rel@kern{0.8}%
    \overline{\rel@kern{-0.8}\macc@nucleus\rel@kern{0.2}}%
    \rel@kern{-0.2}%
  }%
  \macc@depth\@ne
  \let\math@bgroup\@empty \let\math@egroup\macc@set@skewchar
  \mathsurround\z@ \frozen@everymath{\mathgroup\macc@group\relax}%
  \macc@set@skewchar\relax
  \let\mathaccentV\macc@nested@a
  \macc@nested@a\relax111{#1}%
  \endgroup
}
\makeatother

\usepackage[T1]{fontenc}
\usepackage[utf8]{inputenc}
\usepackage{microtype}
\usepackage{placeins}

\usetikzlibrary{arrows.meta,positioning,calc}

\colorlet{darkblue}{myblue}
\hypersetup{
  colorlinks=true,
  linkcolor=darkblue,
  citecolor=darkblue,
  filecolor=darkblue,
  urlcolor=darkblue
}

\newcommand{\diag}{\operatorname{diag}}
\newcommand{\one}{{I}}
\newcommand{\Ef}{E_{\mathrm F}}
\newcommand{\Ec}{E_{\mathrm C}}

\newcommand{\W}{W}
\newcommand{\FME}{F}
\newcommand{\Neg}{N}
\newcommand{\ME}{\mathrm{ME}}
\newcommand{\PhiBell}{\Phi_0}

\definecolor{knownframe}{HTML}{667085}
\definecolor{knownback}{HTML}{F2F4F7}
\definecolor{newframe}{HTML}{175CD3}
\definecolor{newback}{HTML}{EFF8FF}
\definecolor{derivedframe}{HTML}{027A78}
\definecolor{derivedback}{HTML}{ECFDF3}
\definecolor{consequenceframe}{HTML}{B54708}
\definecolor{consequenceback}{HTML}{FFFAEB}

\begin{document}

\title{\Large \textbf{Entanglement cost of quantum depolarization}}
\author[1]{Kun Fang}
\affil[1]{\small School of Data Science, The Chinese University of Hong Kong, Shenzhen,\protect\\
  Guangdong 518172, China}
\date{\today}

\maketitle

\begin{abstract}
Entanglement cost is the asymptotic rate of Bell pairs required to prepare a quantum state. Its regularized definition has made exact evaluation difficult, even for isotropic states (i.e., depolarized maximally entangled states). In this work, we determine the entanglement cost of every qubit isotropic state and, more generally, every two-qubit Bell-diagonal state. Our proof uses a family of suitably tuned qubit semigroups to transform a general log-Sobolev entropy bound into supporting lines of Wootters' function. A recent exact tensorization theorem of Dong et al.\ [arXiv:2606.17729] extends these bounds to arbitrarily many copies, yielding a general lower bound on entanglement cost. Matching this bound with the entanglement of formation also determines the exact cost for a broader class of two-qubit states. For qudit isotropic states, we derive another general lower bound from the full depolarizing $2\to 3$ norm. This bound substantially improves on the PPT-relative entropy of entanglement and nearly matches the entanglement of formation, leaving a gap of at most $2\%$ of $\log d$ that vanishes as $d$ grows. Finally, the exact qubit results and qudit bounds extend to the entanglement cost of preparing quantum depolarizing channels under both parallel and sequential strategies.
\end{abstract}

{
\small
\tableofcontents
}

\section{Introduction}

Entanglement cost quantifies the asymptotic rate of Bell pairs required to prepare a bipartite state by local operations and classical communication. It is characterized by the regularized entanglement of formation, so an exact one-copy formula does not generally determine the operational cost~\cite{HaydenHorodeckiTerhal2001}. This obstruction already appears for isotropic states, obtained by subjecting one half of a maximally entangled state to depolarizing noise. Their entanglement of formation is known in every dimension, and Wootters' formula covers all two-qubit states~\cite{Wootters1998,TerhalVollbrecht2000}. Nevertheless, even the entanglement cost of qubit isotropic states has remained unresolved. The central difficulty is to exclude an asymptotic saving from preparing many copies collectively.

We resolve this problem for qubits. For every two-qubit state $\rho$, we prove the lower bound $\Ec(\rho)\geq\W(\FME(\rho))$ in terms of its normalized entangled fraction (Theorem~\ref{thm:main-state-lower}). This bound matches Wootters' formula whenever the normalized negativity equals the concurrence, yielding $\Ec(\rho)=\Ef(\rho)$ for this precisely characterized class (Proposition~\ref{prop:main-exact-cost}). In particular, collective preparation offers no asymptotic saving for any two-qubit Bell-diagonal state, including every qubit isotropic state (Theorem~\ref{thm:main-bell-diagonal}). The proof converts a general log-Sobolev entropy estimate into supporting lines of Wootters' function through a suitably tuned family of qubit semigroups. The exact tensorization theorem in Ref.~\cite[Theorem~3.14]{DongEtAl2026} then promotes these one-pair bounds to arbitrary tensor powers.

For dimensions $d\geq3$, the exact cost remains open, but we obtain a narrow rigorous interval. The full depolarizing $2\to3$ norm yields a general cubic-norm lower bound for every $d\times d$ state (Theorem~\ref{thm:refined-cubic}). For isotropic states, pairing this bound with the known entanglement of formation gives a certified interval that substantially improves the PPT-relative-entropy benchmark~\cite{Rains1999} and nearly matches the entanglement of formation, leaving a gap of at most $2\%$ of $\log d$ that vanishes as $d$ grows.

These state results also determine the entanglement resources needed to simulate quantum depolarizing channels. A known reduction identifies both the parallel and adaptive sequential simulation costs of a depolarizing channel with the entanglement cost of its isotropic normalized Choi state~\cite[Theorem~1 and Section~IV]{Wilde2018}. Consequently, we obtain the exact entanglement cost of every qubit depolarizing channel under both simulation criteria, together with a pretty-tight estimation for qudit depolarizing channels.

\section{Preliminaries}
We collect the state and semigroup conventions used in the proofs.

\paragraph{Notation.}
All Hilbert spaces are finite dimensional. We write $\sL(A)$ and $\density(A)$ for the linear operators and density operators on $A$, and $\one_A$ for the identity. A pure-state projector is abbreviated as $\psi:=\proj\psi$. The Schatten norms are $\|X\|_s:=(\tr|X|^s)^{1/s}$ for $1\leq s<\infty$, where $|X|:=\sqrt{X^\dagger X}$.
Unless stated otherwise, $\log$ is base two and $\ln$ is natural. We use $S(\rho):=-\tr(\rho\log\rho)$ for the von Neumann entropy and $h_2(p):=-p\log p-(1-p)\log(1-p)$ for the binary entropy. Relative entropy is measured in nats: $D(\rho\Vert\sigma):=\tr[\rho(\ln\rho-\ln\sigma)]$ for $\sigma>0$.

\paragraph{Maximally entangled states and isotropic states.}
For $d\geq2$, the standard maximally entangled vector and projector are
\begin{align}
\ket{\Phi^{(d)}}
&:=\frac1{\sqrt d}\sum_{a=0}^{d-1}\ket a\ket a,
&
\Phi^{(d)}
&:=\proj{\Phi^{(d)}}.
\label{eq:main-standard-maximally-entangled-state}
\end{align}
For qubits, we abbreviate $\ket{\PhiBell}:=\ket{\Phi^{(2)}}$. Let $\sigma_0:=\one$ and $(\sigma_1,\sigma_2,\sigma_3):=(X,Y,Z)$. The Bell basis is fixed as
\begin{align}
\ket{\Phi_\mu}
&:=(\one\otimes\sigma_\mu)\ket{\PhiBell},
\qquad \mu\in\{0,1,2,3\}.
\label{eq:main-bell-basis}
\end{align}

For a state $\rho$ with equal local dimensions, define its normalized entangled fraction by
\begin{align}
\FME(\rho)
&:=2\max_{\Phi\in\ME}\tr(\rho\Phi)-1.
\label{eq:main-bell-certificate}
\end{align}
Here $\ME$ is the set of maximally entangled rank-one projectors in the relevant dimension.

Depolarizing $\Phi^{(d)}$ with noise parameter $0\leq p\leq1$ gives the isotropic states
\begin{align}
\rho_{p,d}
&:=(1-p)\Phi^{(d)}+p\frac{\one_{d^2}}{d^2}.
\label{eq:main-isotropic-choi-family}
\end{align}
For qubits, we abbreviate $\rho_p:=\rho_{p,2}$.

\paragraph{Entanglement cost of states.}
The entanglement cost $\Ec(\omega)$ is the minimum asymptotic Bell-pair rate required to prepare copies of a bipartite state $\omega$ by local operations and classical communication, with vanishing error. The entanglement of formation is
\begin{align}
\Ef(\omega)
&:=\inf_{\omega=\sum_xp_x\proj{\psi_x}}
\sum_xp_xS((\psi_x)_A),
\end{align}
where the infimum is over pure-state decompositions. Regularization gives~\cite{HaydenHorodeckiTerhal2001}
\begin{align}
\Ec(\omega)
&=\lim_{n\to\infty}\frac{1}{n}
\Ef\!\left(\omega^{\otimes n}\right).
\label{eq:main-state-costs}
\end{align}
For a two-qubit state $\omega$, let $p_1\geq p_2\geq p_3\geq p_4$ be the eigenvalues of $\sqrt{\sqrt{\omega}\,\widetilde{\omega}\sqrt{\omega}}$, where $\widetilde{\omega}:=(Y\otimes Y)\omega^*(Y\otimes Y)$. Its concurrence is $C(\omega):=\max\{0,p_1-p_2-p_3-p_4\}$. Wootters' formula gives~\cite{Wootters1998}
\begin{align}
\Ef(\omega)&=\W(C(\omega)),&
\W(c)
&:=
h_2\!\left(\dfrac{1+\sqrt{1-c^2}}{2}\right), \quad 0\leq c\leq1.
\end{align}
We extend $\W$ by setting $\W(c)=0$ for $c<0$. The extended function is nondecreasing and convex, with $\W'(0^+)=0$; these properties underlie the supporting-line argument in Section~\ref{sec:main-qubit}.

\paragraph{Quantum Markov semigroups.}
We use the conventions of Ref.~\cite[Sections~2.1 and~2.4]{DongEtAl2026} for weighted norms and entropy. For a full-rank density operator $\sigma$, put $\Gamma_\sigma^s(X):=\sigma^{s/2}X\sigma^{s/2}$ for $s\in\mathbb R$. The weighted Schatten norms and Kubo--Martin--Schwinger (KMS) inner product are
\begin{align}
\lVert X\rVert_{\sigma,p}
&:=\left\lVert\Gamma_{\sigma}^{1/p}(X)\right\rVert_p,\qquad 1\leq p<\infty,\\
\langle X,Y\rangle_\sigma
&:=\tr\!\left(X^\dagger\sigma^{1/2}Y\sigma^{1/2}\right).
\label{eq:main-weighted-norms}
\end{align}

A quantum Markov semigroup (QMS) is a continuous family $(\Psi_t)_{t\geq0}$ of completely positive unital maps satisfying $\Psi_0=\id$ and $\Psi_{s+t}=\Psi_s\circ\Psi_t$. We use the positive-generator convention $\Psi_t=e^{-t\cL}$. A state $\sigma$ is invariant if $\tr[\sigma\Psi_t(X)]=\tr(\sigma X)$ for all $t\geq0$ and $X$. The QMS is primitive if it has a unique invariant state that is full rank, and $\sigma$-reversible if $\cL$ is self-adjoint with respect to $\langle\cdot,\cdot\rangle_\sigma$~\cite[Definitions~2.7 and~2.10]{DongEtAl2026}.

The QMS has positive off-diagonal scaling (PODS) in a fixed eigenbasis of $\sigma$ if every $\Psi_t$ preserves the diagonal algebra and multiplies each off-diagonal matrix unit by a nonnegative scalar~\cite[Definition~2.11]{DongEtAl2026}.

\paragraph{Log-Sobolev inequalities.}
For a primitive $\sigma$-reversible QMS, the $2$-Dirichlet form is
\begin{align}
\cE_{2,\cL}(X)
&:=\langle X,\cL(X)\rangle_\sigma.
\label{eq:main-weighted-dirichlet}
\end{align}
For $X>0$, put $M:=\sigma^{1/4}X\sigma^{1/4}$. Its weighted $2$-entropy is
\begin{align}
\operatorname{Ent}_{2,\sigma}(X)
&:=
\tr\!\left[M^2(\ln M^2-\ln\sigma)\right]
-\tr(M^2)\ln\tr(M^2).
\label{eq:main-weighted-entropy}
\end{align}
The best constant in the log-Sobolev inequality
\begin{align}
\alpha\,\operatorname{Ent}_{2,\sigma}(X)
&\leq\cE_{2,\cL}(X),
\qquad X>0,
\label{eq:main-weighted-lsi}
\end{align}
is denoted by $\alpha_2(\cL)$.

\section{Entanglement cost for qubit states}
\label{sec:main-qubit}

We derive an entangled-fraction lower bound for two-qubit states and characterize a class for which it is exact. This class includes all Bell-diagonal states and hence all qubit isotropic states.

\subsection{A general lower bound for qubit states}

We begin with a lower bound that holds for every two-qubit state.

\begin{boxtheorem}[Entangled-fraction bound]
\label{thm:main-state-lower}
Let $|A|=|B|=2$. For any $\rho \in \density(AB)$, 
\begin{align}
\Ec(\rho)
&\geq\W\!\left(\FME(\rho)\right).
\label{eq:main-state-lower}
\end{align}
\end{boxtheorem}

The proof begins with a general entropy bound derived from a log-Sobolev inequality. Its right-hand side contains a reference-state term and a Dirichlet form. We construct a qubit semigroup for which these terms combine into a supporting line of $\W$. The exact tensorization theorem~\cite[Theorem~3.14]{DongEtAl2026} extends this bound to arbitrary pure states across many pairs. Averaging over pure-state decompositions and taking the regularized limit then proves Theorem~\ref{thm:main-state-lower}.

\begin{boxlemma}[General entropy bound]
\label{lem:main-general-entropy-bound}
Let $\Sigma$ be a full-rank density operator on $\CC^d$, and let $\cL$ generate a primitive $\Sigma$-reversible QMS with $\alpha_2(\cL)\geq a>0$. For any $M\in\sL(\CC^d)$ with $\tr(M^\dagger M)=1$, put $X:=\Sigma^{-1/4}M\Sigma^{-1/4}$. Then
\begin{align}
(\ln2)S(M^\dagger M)
&\geq-\frac12\tr\!\left[(M^\dagger M+MM^\dagger)\ln\Sigma\right]
-\frac1a\cE_{2,\cL}(X).
\label{eq:main-general-entropy-bound}
\end{align}
\end{boxlemma}

\begin{proof}
This bound holds for arbitrary complex $M$ and does not require a product reference state or a qubit generator. It follows from the Dirichlet-form comparison in Ref.~\cite[Lemma~23]{BeigiDattaRouze2020}.

Define the two positive semidefinite operators
\begin{align}
A&:=\Gamma_\Sigma^{-1/2}(|M|),
&B&:=\Gamma_\Sigma^{-1/2}(|M^\dagger|).
\end{align}
These are $I_{2,2}(X)$ and $I_{2,2}(X^\dagger)$ in the notation of Ref.~\cite{BeigiDattaRouze2020}. Since $|M|^2=M^\dagger M$ and $|M^\dagger|^2=MM^\dagger$ are density operators, Eq.~\eqref{eq:main-weighted-entropy} gives
\begin{align}
\operatorname{Ent}_{2,\Sigma}(A)&=D(M^\dagger M\Vert\Sigma),
&\operatorname{Ent}_{2,\Sigma}(B)&=D(MM^\dagger\Vert\Sigma).
\end{align}
Applying the log-Sobolev inequality to $A$ and $B$ and then using Ref.~\cite[Lemma~23]{BeigiDattaRouze2020} yields
\begin{align}
a\left[D(M^\dagger M\Vert\Sigma)+D(MM^\dagger\Vert\Sigma)\right]
&\leq\cE_{2,\cL}(A)+\cE_{2,\cL}(B)
\leq\cE_{2,\cL}(X)+\cE_{2,\cL}(X^\dagger)
=2\cE_{2,\cL}(X).
\label{eq:main-general-dirichlet-comparison}
\end{align}
To justify the last equality, note that $\cL$ preserves adjoints. Cyclicity of the trace then gives $\cE_{2,\cL}(X^\dagger)=\overline{\cE_{2,\cL}(X)}$, while $\Sigma$-reversibility gives self-adjointness in the weighted inner product. Hence
\begin{align}
\overline{\cE_{2,\cL}(X)}
&=\langle\cL(X),X\rangle_\Sigma
=\langle X,\cL(X)\rangle_\Sigma
=\cE_{2,\cL}(X).
\end{align}
Thus the Dirichlet form is real even when $X$ is not Hermitian. The log-Sobolev inequality extends from positive definite to positive semidefinite inputs by continuity, so the argument also covers singular $M$. Finally, expanding the two relative entropies gives
\begin{align}
\frac12D(M^\dagger M\Vert\Sigma)+\frac12D(MM^\dagger\Vert\Sigma)
&=-(\ln2)S(M^\dagger M)
-\frac12\tr\!\left[(M^\dagger M+MM^\dagger)\ln\Sigma\right],
\label{eq:main-general-entropy-expansion}
\end{align}
because $M^\dagger M$ and $MM^\dagger$ have the same eigenvalues. Combining this identity with Eq.~\eqref{eq:main-general-dirichlet-comparison} proves Eq.~\eqref{eq:main-general-entropy-bound}.
\end{proof}

For a contact point $0<c\leq1$, write the supporting line of Wootters' function as
\begin{align}
L_c(f)&:=\W(c)+\W'(c)(f-c).
\label{eq:main-tuned-supporting-line}
\end{align}
We choose the reference state and generator so that the right-hand side of Eq.~\eqref{eq:main-general-entropy-bound} has this affine dependence on normalized Bell overlap.

\begin{boxlemma}[Supporting-line entropy bound]
\label{lem:main-supporting-line-entropy}
Let $0<c\leq1$ and let $n\geq1$ be an integer. For a normalized pure state $\psi$ on $A^nB^n$, with $|A_i|=|B_i|=2$ for every $i$, set $f_i:=2\tr(\psi_{A_iB_i}\PhiBell)-1$. Then
\begin{align}
S(\psi_{B^n})
&\geq\sum_{i=1}^nL_c(f_i).
\label{eq:main-pure-supporting-line-bound}
\end{align}
\end{boxlemma}

\begin{proof}
We apply Lemma~\ref{lem:main-general-entropy-bound} with a reference state and generator chosen to produce the supporting line $L_c$. Take $\sigma_c:=\diag(\mu_c,\nu_c)$ with
\begin{align}
\mu_c
&:=\frac{1-\sqrt{1-c^2}}{2}, \qquad
\nu_c
:=\frac{1+\sqrt{1-c^2}}{2}.
\label{eq:main-tuned-parameters}
\end{align}
For this reference state, define the qubit generator by
\begin{align}
\cL_c
\begin{pmatrix}
x&z\\ w&y
\end{pmatrix}
&:=
\begin{pmatrix}
\nu_c(x-y)&\eta_c z\\
\eta_c w&\mu_c(y-x)
\end{pmatrix}.
\label{eq:main-tuned-generator}
\end{align}
For $0<c<1$, set
\begin{align}
a_c&:=\frac{\sqrt{1-c^2}}{\ln(\nu_c/\mu_c)},
&
\eta_c&:=\frac12+\sqrt{\mu_c\nu_c}=\frac{1+c}{2}.
\label{eq:main-tuned-rates}
\end{align}
At $c=1$, take $a_1=1/2$ and $\eta_1=1$ by continuity. Appendix~\ref{app:tuned-qubit-lsi} shows that $\Psi_{c,t}:=e^{-t\cL_c}$ is a primitive $\sigma_c$-reversible PODS semigroup with $\alpha_2(\cL_c)=a_c$. To treat $n$ pairs, use the product reference state and the sum of local generators:
\begin{align}
\Sigma_c&:=\sigma_c^{\otimes n},
&\cL_c^{(n)}&:=\sum_{i=1}^n\cL_{c,i},
&\cL_{c,i}&:=\id^{\otimes(i-1)}\otimes\cL_c\otimes\id^{\otimes(n-i)}.
\end{align}
The exact tensorization theorem~\cite[Theorem~3.14]{DongEtAl2026} therefore applies and preserves the log-Sobolev constant for every $n$:
\begin{align}
\alpha_2(\cL_c^{(n)})&=\alpha_2(\cL_c)=a_c.
\label{eq:main-tensorized-lsi}
\end{align}

To estimate the entanglement of $\psi$, write $\ket{\psi}=\sum_{\mathbf a,\mathbf b}M_{\mathbf a,\mathbf b}\ket{\mathbf a}_{A^n}\ket{\mathbf b}_{B^n}$ in the computational bases and set $X:=\Sigma_c^{-1/4}M\Sigma_c^{-1/4}$. Normalization gives $\tr(M^\dagger M)=1$, and the Schmidt decomposition gives $S(\psi_{B^n})=S(M^\dagger M)$. Applying Lemma~\ref{lem:main-general-entropy-bound} therefore yields
\begin{align}
(\ln2)S(\psi_{B^n})
&\geq-\frac12\tr\!\left[(M^\dagger M+MM^\dagger)\ln\Sigma_c\right]
-\frac1{a_c}\cE_{2,\cL_c^{(n)}}(X).
\label{eq:main-common-tangent-entropy-bound}
\end{align}

It remains to evaluate the two terms on the right. Both $\cL_c^{(n)}$ and $\ln\Sigma_c$ are sums of local terms, so the calculation reduces to the pair marginals even when $\psi$ is entangled across pairs. For $\rho_i:=\psi_{A_iB_i}$, set $p_{ab}^{(i)}:=\bra{ab}\rho_i\ket{ab}$ and $z_i:=\bra{00}\rho_i\ket{11}$. Appendix~\ref{app:tuned-qubit-matching} gives the resulting expressions:
\begin{align}
\cE_{2,\cL_c^{(n)}}(X)
&=\sum_{i=1}^n\left[\nu_c p_{00}^{(i)}+\mu_c p_{11}^{(i)}
+\eta_c\bigl(p_{01}^{(i)}+p_{10}^{(i)}\bigr)
-2\sqrt{\mu_c\nu_c}\operatorname{Re}z_i\right],
\label{eq:main-common-tangent-energy}\\
-\frac12\tr\!\left[(M^\dagger M+MM^\dagger)\ln\Sigma_c\right]
&=\sum_{i=1}^n\left[-p_{00}^{(i)}\ln\mu_c-p_{11}^{(i)}\ln\nu_c
-\frac{p_{01}^{(i)}+p_{10}^{(i)}}2\ln(\mu_c\nu_c)\right].
\label{eq:main-common-tangent-reference}
\end{align}
Since $\sum_{a,b}p_{ab}^{(i)}=1$, the Bell overlap satisfies $f_i=2\operatorname{Re}z_i-p_{01}^{(i)}-p_{10}^{(i)}$. The rates in Eq.~\eqref{eq:main-tuned-rates} make the two terms above combine into an affine function of this overlap. Substituting these expressions into Eq.~\eqref{eq:main-common-tangent-entropy-bound} yields
\begin{align}
S(\psi_{B^n})
&\geq\sum_{i=1}^n\left[\W(c)-c\W'(c)+\W'(c)f_i\right]
=\sum_{i=1}^nL_c(f_i).
\label{eq:main-common-tangent-entropy}
\end{align}
\end{proof}

\begin{proof}[Proof of Theorem~\ref{thm:main-state-lower}]
Fix $0<c\leq1$. Choose a maximally entangled projector attaining $\FME(\rho)$ and rotate it to $\PhiBell$. Apply the same local change of basis to every copy of $\rho$. For an arbitrary pure-state decomposition $\rho^{\otimes n}=\sum_jr_j\psi_j$, Lemma~\ref{lem:main-supporting-line-entropy} gives
\begin{align}
\sum_jr_jS((\psi_j)_{B^n})
&\geq n[\W(c)-c\W'(c)]
+\W'(c)\sum_{i=1}^n\sum_jr_j
\bigl(2\tr[(\psi_j)_{A_iB_i}\PhiBell]-1\bigr)\nonumber\\
&=n\bigl[\W(c)-c\W'(c)+\W'(c)\FME(\rho)\bigr].
\label{eq:main-homogeneous-convex-roof}
\end{align}
The equality follows from linearity and the fact that each pair marginal $(\rho^{\otimes n})_{A_iB_i}$ equals $\rho$. The ensemble vectors need not factor across copies. Taking the infimum over all decompositions gives the same lower bound for $\Ef(\rho^{\otimes n})$.

The extended function $\W$ is the supremum of its supporting lines:
\begin{align}
\sup_{0<c\leq1}
\bigl[\W(c)-c\W'(c)+x\W'(c)\bigr]
&=\W(x),\qquad -1\leq x\leq1.
\label{eq:main-tangent-envelope}
\end{align}
For $x\leq0$, the supremum is approached as $c\to0^+$, so the singular invariant state at $c=0$ is never needed. Optimizing over $c$ in Eq.~\eqref{eq:main-homogeneous-convex-roof} therefore gives
\begin{align}
\Ef(\rho^{\otimes n})
&\geq n\W\!\left(\FME(\rho)\right).
\label{eq:main-state-lower-finite-power}
\end{align}
Dividing by $n$ and taking $n\to\infty$ in Eq.~\eqref{eq:main-state-costs} proves the claim.
\end{proof}

\bigskip
\begin{remark}
Ref.~\cite[Theorem~3.14]{DongEtAl2026} provides the exact tensorized constant $a_c$ required by Eq.~\eqref{eq:main-common-tangent-entropy}. For $0<c<1$, the tuned generator $\cL_c$ has the same diagonal action as the simple generalized depolarizing generator with invariant state $\sigma_c$, but its off-diagonal rate is $\eta_c=(1+c)/2<1$ instead of $1$. It is also not Hilbert--Schmidt self-adjoint. Thus neither the generalized-depolarizing results~\cite[Theorems~21 and~24]{BeigiDattaRouze2020} nor King's theorem~\cite[Theorem~1]{King2014} covers this family. The general estimate in Ref.~\cite[Corollary~26]{BeigiDattaRouze2020} gives only $\eta_c a_c<a_c$, which is insufficient for the supporting-line matching.
\end{remark}

\subsection{Tightness of the entangled-fraction bound}

We now identify states for which the lower bound is tight. Define the normalized negativity $\Neg(\rho):=\lVert\rho^\Gamma\rVert_1-1$, where $\Gamma$ denotes partial transpose, and consider the class
\begin{align}
\cS_{\mathrm{ex}}
&:=\{\rho\in\density(AB): \dim A=\dim B=2,\ \Neg(\rho)=C(\rho)\}.
\label{eq:main-exact-class}
\end{align}
Every separable two-qubit state belongs to $\cS_{\mathrm{ex}}$ because both $\Neg$ and $C$ vanish. The following lemma characterizes which entangled states belong to this class.

\begin{boxlemma}[Characterization of the exact class]
\label{lem:main-state-tightness}
Every two-qubit state $\rho$ satisfies
$\FME(\rho)
\leq\Neg(\rho)\leq C(\rho)$.
If $\rho$ is entangled, the following conditions are equivalent:
\begin{enumerate}[(i),leftmargin=2.8em,itemsep=1pt,topsep=3pt]
\item $\rho\in\cS_{\mathrm{ex}}$;
\item $\FME(\rho)=\Neg(\rho)$;
\item $\FME(\rho)=C(\rho)$;
\item the one-dimensional negative eigenspace of $\rho^\Gamma$ is spanned by a maximally entangled vector.
\end{enumerate}

\end{boxlemma}

\begin{proof}
Each maximally entangled projector $\Phi$ determines another such projector $\Xi_\Phi$ through the partial-transpose identity below. The correspondence $\Phi\mapsto\Xi_\Phi$ is bijective, giving
\begin{align}
\Phi^\Gamma
&=\frac{\one_4}{2}-\Xi_\Phi,
&
\FME(\rho)
&=-2\min_{\Xi\in\ME}\tr(\rho^\Gamma\Xi).
\label{eq:main-bell-partial-transpose}
\end{align}
If $\rho$ is separable, then $\FME(\rho)\leq0=\Neg(\rho)=C(\rho)$, so the inequalities hold. If $\rho$ is entangled, its partial transpose has a unique negative eigenvalue $\lambda_-(\rho^\Gamma)$. Rayleigh--Ritz and the sharp negativity--concurrence comparison~\cite{VerstraeteEtAl2001} give
\begin{align}
\FME(\rho)
&\leq-2\lambda_-(\rho^\Gamma)
=\Neg(\rho)
\leq C(\rho).
\label{eq:main-tightness-chain-proof}
\end{align}
Rayleigh--Ritz shows that condition~(ii) is equivalent to condition~(iv). The equality characterization for $\Neg(\rho)\leq C(\rho)$ in Ref.~\cite{VerstraeteEtAl2001}, together with Eq.~\eqref{eq:main-exact-class}, shows that condition~(i) is equivalent to condition~(iv). Finally, condition~(iii) holds exactly when both inequalities in Eq.~\eqref{eq:main-tightness-chain-proof} are equalities. These observations establish the equivalence of conditions~(i)--(iv).
\end{proof}

On $\cS_{\mathrm{ex}}$, this characterization makes the entangled-fraction lower bound coincide with the entanglement of formation, yielding the exact cost.

\begin{boxproposition}[Exact cost when negativity equals concurrence]
\label{prop:main-exact-cost}
For every $\rho\in\cS_{\mathrm{ex}}$,
\begin{align}
\Ec(\rho)&=\Ef(\rho)=\W(C(\rho)).
\label{eq:main-exact-class-cost}
\end{align}
\end{boxproposition}

\begin{proof}
For an entangled state in $\cS_{\mathrm{ex}}$, Theorem~\ref{thm:main-state-lower}, Lemma~\ref{lem:main-state-tightness}, and Wootters' formula give
\begin{align}
\W(C(\rho))
&=\W(\FME(\rho))
\leq\Ec(\rho)\leq\Ef(\rho)=\W(C(\rho)).
\label{eq:main-exact-class-sandwich}
\end{align}
The upper bound follows by preparing the copies independently. For separable states, all quantities vanish, completing the proof.
\end{proof}

In particular, this gives the exact cost of qubit isotropic states and, more generally, all Bell-diagonal states.

\begin{boxtheorem}
\label{thm:main-bell-diagonal}
For every $0\leq p\leq1$, the qubit isotropic state satisfies
\begin{align}
\Ec(\rho_p)
&=\Ef(\rho_p)
=\W\!\left(1-\frac{3p}{2}\right).
\label{eq:main-isotropic-cost}
\end{align}
More generally, let $\rho=\sum_{z=0}^3q_z\Phi_z$ be a Bell-diagonal state and set $q_{\max}:=\max_zq_z$. Then
\begin{align}
\Ec(\rho)&=\Ef(\rho)=\W(2q_{\max}-1).
\label{eq:main-bell-diagonal-cost}
\end{align}
\end{boxtheorem}

\begin{proof}
Direct evaluation gives
$
\FME(\rho)=2q_{\max}-1$ and
$\Neg(\rho)=C(\rho)$.
Hence $\rho\in\cS_{\mathrm{ex}}$, and Proposition~\ref{prop:main-exact-cost} gives Eq.~\eqref{eq:main-bell-diagonal-cost}. Specializing to $\rho_p$, for which $\FME(\rho_p)=1-3p/2$, gives Eq.~\eqref{eq:main-isotropic-cost}.
\end{proof}

\section{Entanglement cost for qudit states}
\label{sec:main-qudit}

We derive an entangled-fraction lower bound for $d\times d$ states with $d\ge3$ and evaluate it for isotropic states. As in Section~\ref{sec:main-qubit}, the proof rests on affine entropy bounds valid for arbitrarily many pairs. Here we obtain these bounds from the standard depolarizing semigroup and its full $2\to3$ norm.

\subsection{A general lower bound for qudit states}
\label{sec:refined-cubic-main}

For $d\ge3$, write $\tau_d(X):=\tr(X)/d$ and $\|X\|_{s,\tau_d}:=\tau_d(|X|^s)^{1/s}$. The depolarizing map and its $2\to3$ norm are
\begin{align}
\Delta_r(X)&:=rX+(1-r)\tau_d(X)\one,&
K_d(r)&:=\sup_{X\ne0}\frac{\|\Delta_r(X)\|_{3,\tau_d}}{\|X\|_{2,\tau_d}},
\label{eq:refined-cubic-norm-def}
\end{align}
where $0\le r\le1$ and the supremum includes all complex matrices. For $0<r\le1$, define the affine functions and their envelope by
\begin{align}
\ell_{d,r}(f)&:=\log d+3(1-f)\log r-6\log K_d(r),
\label{eq:main-cubic-affine-function}\\
B_d(c)&:=\sup_{0<r\le1}\ell_{d,r}(c),\qquad -1\le c\le1.
\label{eq:refined-cubic-profile}
\end{align}
The term ``cubic-norm'' refers to the Schatten $3$-norm defining $K_d$. Since $K_d(1)=d^{1/6}$, the choice $r=1$ gives $\ell_{d,1}=0$, so $B_d$ is nonnegative.

\begin{boxtheorem}[Cubic-norm lower bound]
\label{thm:refined-cubic}
For every $d\times d$ state $\rho$ with $d\ge3$,
\begin{align}
\Ec(\rho)&\ge B_d(\FME(\rho)).
\label{eq:refined-cubic-cost}
\end{align}
\end{boxtheorem}

The proof follows the same route as the qubit argument. We first show that $\ell_{d,r}$ bounds pure-state entropy in terms of the local Bell overlaps, then average over preparation ensembles and optimize over $r$. The standard depolarizing generator is useful because its Dirichlet form equals the sum of the Bell-overlap deficits, while the maximally mixed reference state contributes a constant term. The required norm estimate is proved in Section~\ref{sec:main-cubic-proof}, and the resulting envelope is evaluated in Section~\ref{sec:main-cubic-profile}.

\begin{boxlemma}[Affine entropy bound]
\label{lem:main-cubic-affine-entropy}
Let $d\ge3$, $0<r\le1$, and $n\ge1$. For a normalized pure state $\psi$ on $A^nB^n$, with $|A_i|=|B_i|=d$ for every $i$, set $f_i:=2\tr(\psi_{A_iB_i}\Phi^{(d)})-1$. Then
\begin{align}
S(\psi_{B^n})&\ge\sum_{i=1}^n\ell_{d,r}(f_i).
\label{eq:main-cubic-affine-entropy}
\end{align}
\end{boxlemma}

\begin{proof}
Take the reference state $\sigma=\one_d/d$ and the generator $\cL_d(X):=X-\tau_d(X)\one$, with semigroup $e^{-t\cL_d}=\Delta_{e^{-t}}$. This generator projects onto the traceless matrices. Since the normalized identity $\one_d/\sqrt d$ is the coefficient matrix of $\ket{\Phi^{(d)}}$, this projection removes the component along $\ket{\Phi^{(d)}}$.

Put $D=d^n$ and $t=-\ln r$. Let $\mathcal E_i$ replace the $i$th matrix factor by its normalized trace, and use the sum of local generators:
\begin{align}
\mathcal E_i(X)&:=\frac{\one_d}{d}\otimes\tr_iX,&
\cL_d^{(n)}&:=\sum_{i=1}^n(\id-\mathcal E_i),&
e^{-t\cL_d^{(n)}}&=\Delta_r^{\otimes n}.
\end{align}
The identity in $\mathcal E_i$ is inserted in the original tensor position. Lemma~\ref{lem:refined-cubic-product} gives the norm estimate
\begin{align}
\|\Delta_r^{\otimes n}(X)\|_{3,\tau_D}
&\le K_d(r)^n\|X\|_{2,\tau_D}
\qquad\text{for every matrix }X.
\label{eq:main-cubic-tensor-power}
\end{align}
We convert this estimate into an entropy inequality by interpolating back to the $2$-norm and differentiating there. Since the maps $\id-\mathcal E_i$ are orthogonal projections with respect to $\langle X,Y\rangle_{\tau_D}:=\tau_D(X^\dagger Y)$, their sum $\cL_d^{(n)}$ is positive and self-adjoint, and $e^{-\mathrm iut\cL_d^{(n)}}$ is a $2$-norm isometry.

For fixed $X$, consider the bounded holomorphic family $F(z):=e^{-zt\cL_d^{(n)}}X$ on $0\le\Re z\le1$. The isometry and the all-operator estimate in Eq.~\eqref{eq:main-cubic-tensor-power} give, for every real $u$,
\begin{align*}
\|F(\mathrm iu)\|_{2,\tau_D}&=\|X\|_{2,\tau_D},\\
\|F(1+\mathrm iu)\|_{3,\tau_D}
&=\|\Delta_r^{\otimes n}(e^{-\mathrm iut\cL_d^{(n)}}X)\|_{3,\tau_D}
\le K_d(r)^n\|X\|_{2,\tau_D}.
\end{align*}
Apply the interpolation theorem~\cite[Theorem~2]{Beigi2013} with reference state $\one_D/D$ and boundary exponents $2$ and $3$. Its weighted norms are our normalized Schatten norms, so these boundary bounds yield
\begin{align}
\|e^{-\theta t\cL_d^{(n)}}X\|_{s(\theta),\tau_D}
&\le\|X\|_{2,\tau_D}^{1-\theta}
\bigl(K_d(r)^n\|X\|_{2,\tau_D}\bigr)^\theta
=K_d(r)^{n\theta}\|X\|_{2,\tau_D}.
\label{eq:main-cubic-interpolation}
\end{align}
Here $1/s(\theta)=(1-\theta)/2+\theta/3$ for $0\le\theta\le1$.

Normalize $\tau_D(X^\dagger X)=1$ and write $\operatorname{Ent}_{\tau_D}(X^\dagger X):=\tau_D(X^\dagger X\ln(X^\dagger X))$. The two sides of Eq.~\eqref{eq:main-cubic-interpolation} agree at $\theta=0$, so taking the right derivative of their natural logarithms yields
\begin{align}
\frac16\operatorname{Ent}_{\tau_D}(X^\dagger X)
-t\langle X,\cL_d^{(n)}X\rangle_{\tau_D}
&\le n\ln K_d(r).
\label{eq:main-cubic-entropy}
\end{align}
Indeed, $s'(0)=2/3$ and $\left.\partial_s\ln\|X\|_{s,\tau_D}\right|_{s=2}=\operatorname{Ent}_{\tau_D}(X^\dagger X)/4$, giving the entropy coefficient $1/6$. Differentiating the matrix argument contributes the Dirichlet term $-t\langle X,\cL_d^{(n)}X\rangle_{\tau_D}$. These identities hold for arbitrary complex $X$; singular matrices follow by an invertible approximation and continuity.

It remains to express the entropy and Dirichlet form in terms of $\psi$. Let $M$ be its coefficient matrix in the computational bases and put $X=\sqrt D\,M$, so $\tau_D(X^\dagger X)=\tr(M^\dagger M)=1$. The squared singular values of $M$ are the Schmidt probabilities, while tracing matching row and column indices contracts with the maximally entangled vector. Hence
\begin{align}
\operatorname{Ent}_{\tau_D}(X^\dagger X)
&=n\ln d-(\ln2)S(\psi_{B^n}),\\
\langle X,\cL_d^{(n)}X\rangle_{\tau_D}
&=\sum_{i=1}^n\left(1-\frac1d\|\tr_iM\|_2^2\right)
=\frac12\sum_{i=1}^n(1-f_i).
\label{eq:main-cubic-vectorization}
\end{align}
Here the displayed $2$-norm is unnormalized, and the identities hold even when $\psi$ correlates different pairs. Substituting into Eq.~\eqref{eq:main-cubic-entropy} and converting to base-two logarithms gives
\begin{align}
S(\psi_{B^n})
&\ge n\log d+3\log r\sum_{i=1}^n(1-f_i)-6n\log K_d(r)
=\sum_{i=1}^n\ell_{d,r}(f_i).
\end{align}
\end{proof}

\begin{proof}[Proof of Theorem~\ref{thm:refined-cubic}]
Choose a maximally entangled projector attaining $c:=\FME(\rho)$ and rotate it to $\Phi^{(d)}$, using the same local change of basis on every copy. Fix $0<r\le1$ and consider an arbitrary pure-state decomposition $\rho^{\otimes n}=\sum_jw_j\psi_j$. For $f_{ij}:=2\tr[(\psi_j)_{A_iB_i}\Phi^{(d)}]-1$, Lemma~\ref{lem:main-cubic-affine-entropy} gives
\begin{align}
\sum_jw_jS((\psi_j)_{B^n})
&\ge n\log d+3\log r\sum_{i=1}^n\left(1-\sum_jw_jf_{ij}\right)-6n\log K_d(r)
=n\ell_{d,r}(c).
\label{eq:main-cubic-ensemble}
\end{align}
The equality follows from $\sum_jw_jf_{ij}=c$ at every site, by linearity and the identical marginals of $\rho^{\otimes n}$. The same projector and parameter $r$ are fixed for all ensemble components, which may correlate different copies. Taking the infimum over decompositions, optimizing over $r$, and using Eq.~\eqref{eq:main-state-costs} gives
\begin{align}
\Ef(\rho^{\otimes n})&\ge nB_d(c),&
\Ec(\rho)&=\lim_{n\to\infty}\frac1n\Ef(\rho^{\otimes n})\ge B_d(c).
\label{eq:main-cubic-regularization}
\end{align}
\end{proof}

\subsection{The depolarizing cubic norm and its tensorization}
\label{sec:main-cubic-proof}

We now establish the norm estimate used in Lemma~\ref{lem:main-cubic-affine-entropy}. First we reduce the one-site norm to a scalar maximization; then we show that these norms multiply under tensor products. The cubic exponent makes both steps tractable through third moments and products of three matrix blocks. Write $c_d:=(d-2)/\sqrt{d-1}$, and use $\Delta_r^{(d)}$ when the dimension must be explicit.

\begin{boxlemma}[Exact cubic norm]
\label{lem:refined-cubic-norm}
For every $d\geq3$ and $0\leq r\leq1$,
\begin{align}
 K_d(r)^3=
 \max_{0\leq z\leq\sqrt{d-1}}
 \frac{1+3r^2z^2+c_dr^3z^3}{(1+z^2)^{3/2}}.
 \label{eq:refined-cubic-scalar-norm}
\end{align}
\end{boxlemma}
\begin{proof}
We first reduce the all-operator norm to positive inputs.
For a completely positive map $\Psi$, applying $\id_2\otimes\Psi$ to
\begin{align}
 \begin{pmatrix}|X^\dagger|&X\\X^\dagger&|X|\end{pmatrix}\geq0
\end{align}
preserves positivity and gives the factorization
$\Psi(X)=\Psi(|X^\dagger|)^{1/2}V\Psi(|X|)^{1/2}$ for a contraction
$V$ on the relevant supports. For $s\geq1$, Schatten H\"older gives
\begin{align}
 \|\Psi(X)\|_{s,\tau}
 \leq\|\Psi(|X^\dagger|)\|_{s,\tau}^{1/2}
      \|\Psi(|X|)\|_{s,\tau}^{1/2}.
 \label{eq:refined-cubic-cp-modulus}
\end{align}
Since both moduli have the same $2$-norm as $X$, a positive-input
$2\to s$ estimate implies the all-operator estimate with the same
constant. Singular cases follow by continuity. For $s=3$, this is the
tracial case of Ref.~\cite[Lemma~3.2]{DongEtAl2026}, with $n=1$,
$\sigma=\one/d$, $p=2$, and $q=3$.

By Eq.~\eqref{eq:refined-cubic-cp-modulus}, it suffices to consider $X\geq0$ with
$\tau_d(X^2)=1$. Write $m:=\tau_d(X)>0$, $v^2:=1-m^2$, and $z:=v/m$.
Positivity gives $m\geq d^{-1/2}$ and hence
$0\leq z\leq\sqrt{d-1}$. The sharp centered third-moment bound is
\begin{align}
 \tau_d\bigl((X-m\one)^3\bigr)\leq c_dv^3.
 \label{eq:refined-cubic-third-moment}
\end{align}
To prove the moment bound, maximize the third moment of $d$ real numbers with
zero mean and unit second moment. The Lagrange multiplier equations imply that
a stationary point has at most two distinct values.
If the positive value occurs $k$ times, the third moment is
$(d-2k)/\sqrt{k(d-k)}$. Its largest value is attained at $k=1$ and
equals $c_d$. Compactness of the constraint sphere gives the global
bound; $v=0$ is immediate.

Expanding the output cube gives
\begin{align}
 \tau_d\bigl(\Delta_r^{(d)}(X)^3\bigr)
 =m^3+3mr^2v^2+r^3\tau_d\bigl((X-m\one)^3\bigr).
\end{align}
For every admissible $z$, equality in Eq.~\eqref{eq:refined-cubic-third-moment}
is attained by the nonnegative spectrum
\begin{align}
 x_1=\frac{1+\sqrt{d-1}\,z}{\sqrt{1+z^2}},
 \qquad
 x_2=\cdots=x_d=
 \frac{1-z/\sqrt{d-1}}{\sqrt{1+z^2}}.
\end{align}
The moment bound gives the upper bound in Eq.~\eqref{eq:refined-cubic-scalar-norm}, and this spectrum attains it for every admissible $z$. The scalar maximization is therefore exact.
\end{proof}

To control collective preparation, we next establish multiplicativity of the one-site norm. We adapt the proof of Dong et al.~\cite[Theorem~3.7]{DongEtAl2026}, based on their norm-compression inequality \cite[Theorem~3.3]{DongEtAl2026}, while retaining the full $2\to3$ norm constants.

\begin{boxlemma}[Tensorization of the cubic norm]
\label{lem:refined-cubic-product}
For any dimensions $d_i\geq3$ and parameters $0\leq r_i\leq1$,
\begin{align}
 \left\|\bigotimes_{i=1}^n\Delta^{(d_i)}_{r_i}\right\|_{2\to3}
 =\prod_{i=1}^nK_{d_i}(r_i),
 \label{eq:refined-cubic-product}
\end{align}
where all norms use the normalized trace on their respective algebras.
\end{boxlemma}
\begin{proof}
Consider a completely positive map $\Psi':\sL(\CC^m)\to\sL(\CC^m)$ with all-operator $2\to3$ norm at most $C'$.
It suffices to prove the upper bound $K_d(r)C'$ for $\Delta_r^{(d)}\otimes\Psi'$. By the positive-input reduction, we may take $X\geq0$. Write its $m\times m$
blocks as $X_{ij}$, set $B_{ij}:=\Psi'(X_{ij})$, and let
$\widetilde B_{ij}:=\|B_{ij}\|_{3,\tau_m}$. Put
\begin{align}
 A:=(\Delta_r^{(d)}\otimes\Psi')(X),\qquad
 C:=\Delta_r^{(d)}(\widetilde B).
\end{align}
The scalar matrix $C$ is real symmetric and entrywise nonnegative.
The triangle inequality on diagonal blocks, and the exact off-diagonal
action, give $\|A_{ij}\|_{3,\tau_m}\leq C_{ij}$ for every $i,j$.
Expanding the cubic trace and applying H\"older to each product yields
\begin{align}
 \|A\|_{3,\tau_{dm}}^3
 &=\frac1d\sum_{i,j,k}\tau_m(A_{ij}A_{jk}A_{ki})\leq\frac1d\sum_{i,j,k}C_{ij}C_{jk}C_{ki}
 =\tau_d(C^3)\leq\tau_d(|C|^3).
 \label{eq:refined-cubic-compression}
\end{align}
Here $A\geq0$, but $C$ need not be positive semidefinite. It is
therefore essential that $K_d(r)$ is an all-operator norm. It follows that
\begin{align}
 \|A\|_{3,\tau_{dm}}
 &\leq K_d(r)\|\widetilde B\|_{2,\tau_d},\qquad
 \|\widetilde B\|_{2,\tau_d}^2
 =\frac1d\sum_{i,j}\|\Psi'(X_{ij})\|_{3,\tau_m}^2
 \leq (C')^2\|X\|_{2,\tau_{dm}}^2.
 \label{eq:refined-cubic-retained-constants}
\end{align}
This proves the two-factor upper bound $K_d(r)C'$, which iterates to the claimed product estimate. Product inputs attaining the one-site norms give the reverse inequality. Thus the exact norm constant is preserved even when the individual maps are not contractive.
\end{proof}

\subsection{Explicit evaluation of the cubic-norm bound}
\label{sec:main-cubic-profile}

It remains to optimize the affine entropy bounds over $r$. The scalar formula in Lemma~\ref{lem:refined-cubic-norm} makes this a one-variable problem. Define
\begin{align}
 u_d:=(d-1)^{1/3}+(d-1)^{-1/3},\qquad
 r_{0,d}:=u_d^{-1/2},\qquad
 c_{*,d}:=2(u_d-1)^{-2}-1.
 \label{eq:refined-cubic-critical-parameters}
\end{align}
Here $r_{0,d}$ is the largest parameter for which $K_d(r)=1$, and $c_{*,d}$ is the overlap at which the optimizer reaches this contractive endpoint. As $c$ increases, the optimal parameter moves from $r=1$, through an interior value, to $r=r_{0,d}$, giving the three branches below.
For $2/d-1<c<c_{*,d}$, set $z=\sqrt{(1-c)/(1+c)}$ and define
$r=r_d(c)\in(r_{0,d},1)$ as the unique root of
\begin{align}
 c_dzr^3+(2-z^2)r^2-1=0.
 \label{eq:refined-cubic-root}
\end{align}

\begin{boxproposition}[Explicit cubic-norm bound]
\label{prop:refined-cubic-explicit}
For every $d\geq3$,
\begin{align}
 B_d(c)=
 \begin{cases}
 0,
 &-1\leq c\leq 2/d-1,\\[6pt]
 \displaystyle
 \log\frac{2d(1+c)}{\bigl[1+c+(1-c)r^2\bigr]^2}
 +3(1-c)\log r,
 &2/d-1<c<c_{*,d},\\[14pt]
 \displaystyle
 \log d-\frac32(1-c)\log u_d,
 &c_{*,d}\leq c\leq1,
 \end{cases}
 \label{eq:refined-cubic-piecewise}
\end{align}
Here $r=r_d(c)$ is defined by Eq.~\eqref{eq:refined-cubic-root}.
\end{boxproposition}
\begin{proof}
We first locate the contractive range of the cubic norm, then optimize the entropy bound over the remaining parameters.
Denote the scalar objective in Eq.~\eqref{eq:refined-cubic-scalar-norm}
by $g_r(z)$. Its derivative has the sign of
\begin{align}
 z\bigl[(2r^2-1)+c_dr^3z-r^2z^2\bigr].
\end{align}
Using $c_d^2=(u_d-2)(u_d+1)^2$ shows that, at $r=r_{0,d}$, the global
maxima are $z=0$ and $z=\sqrt{u_d(u_d-2)}$, both with value one.
For each $z>0$, $g_r(z)$ is strictly increasing in $r$. The derivative
sign therefore gives $K_d(r)=1$ for $r\leq r_{0,d}$; for
$r_{0,d}<r\leq1$, its unique positive global maximizer is
\begin{align}
 z(r)=\frac{c_dr+\sqrt{c_d^2r^2+8-4/r^2}}2.
 \label{eq:refined-cubic-maximizer}
\end{align}
It increases from $\sqrt{u_d(u_d-2)}$ to $\sqrt{d-1}$. Stationarity and envelope differentiation yield
\begin{align}
 K_d(r)^3=\frac{1+r^2z(r)^2}{\sqrt{1+z(r)^2}},
 \qquad
 \frac{d\log K_d(r)}{d\log r}=\frac{z(r)^2}{1+z(r)^2}.
 \label{eq:refined-cubic-envelope}
\end{align}
We now maximize $\ell_{d,r}(c)$ over $r$. For $r>r_{0,d}$, its derivative with respect to $\log r$ is $3[(1-z(r)^2)/(1+z(r)^2)-c]$. Since $z(r)$ increases with $r$, the derivative decreases, and the maximum is determined by its sign at the endpoints or by its unique zero. The resulting three branches are
$r=1$ below $c=2/d-1$, the root in
Eq.~\eqref{eq:refined-cubic-root} in the intermediate regime, and $r=r_{0,d}$
above $c=c_{*,d}$. Substitution proves the asserted formulas and their
matching endpoint values. For $r<r_{0,d}$, $K_d(r)=1$, and increasing
$r$ to $r_{0,d}$ cannot reduce the objective.
\end{proof}

The affine branch comes from the contractive endpoint $r=r_{0,d}$. In the intermediate regime, optimizing over $r>r_{0,d}$ while retaining the full norm $K_d(r)$ yields the stronger nonlinear branch.

\subsection{Isotropic states: comparison and tightness}
\label{sec:refined-tightness}

We now specialize the bound to isotropic states, whose normalized entangled fraction is
\begin{align}
\FME(\rho_{p,d})&=1-2\eta_dp,\qquad \eta_d:=1-d^{-2}.
\label{eq:main-isotropic-overlap}
\end{align}
Combining Theorem~\ref{thm:refined-cubic} with independent preparation of each copy gives the cost interval
\begin{align}
B_d(1-2\eta_dp)
&\le\Ec(\rho_{p,d})\le U_d(p):=\Ef(\rho_{p,d}),
\label{eq:main-cubic-cost-sandwich}
\end{align}
for $d\geq3$ and $0\leq p\leq1$. Its width bounds the possible advantage of collective preparation over independent preparation. We compare this interval with the PPT-relative-entropy lower bound below.

\paragraph{Entanglement of formation.}
For isotropic states, the entanglement of formation is known explicitly~\cite{TerhalVollbrecht2000}:
\begin{align}
U_d(p)
&=
\begin{cases}
\displaystyle
\log d-\kappa_dp,
&0\leq p\leq p_{\mathrm{L},d},\\[2pt]
h_2(\gamma_d(p))
+[1-\gamma_d(p)]\log(d-1),
&p_{\mathrm{L},d}<p<p_{\mathrm{H},d},\\[2pt]
0,
&p_{\mathrm{H},d}\leq p\leq1.
\end{cases}
\label{eq:main-isotropic-eof}
\end{align}
The parameters and auxiliary function are
\begin{align}
\kappa_d
&:=\frac{d^2-1}{d(d-2)}\log(d-1),
\label{eq:main-qudit-kappa}\\
\gamma_d(p)
&:=\frac1d\left(\sqrt{1-\eta_d p}+\sqrt{(d-1)\eta_d p}\right)^2,
\label{eq:main-isotropic-gamma}\\
p_{\mathrm{L},d}
&:=\frac{(d-2)^2}{d^2-1},\qquad
p_{\mathrm{H},d}:=\frac{d}{d+1}.
\label{eq:main-isotropic-noise-boundaries}
\end{align}
The state is separable for $p\geq p_{\mathrm H,d}$, while $p_{\mathrm L,d}$ separates the affine and nonlinear branches of $U_d$.

\paragraph{PPT relative entropy of entanglement.}
As a benchmark, we use the PPT relative entropy of entanglement, the best previously known lower bound on $\Ec(\rho_{p,d})$. For isotropic states, its value is~\cite[Theorem~7]{Rains1999}
\begin{align}
R_d(p)&:=
\begin{cases}
\log d-h_2(\eta_dp)-\eta_dp\log(d-1),&p<p_{\mathrm H,d},\\
0,&p\ge p_{\mathrm H,d}.
\end{cases}
\label{eq:main-isotropic-ppt-lower}
\end{align}
Figure~\ref{fig:main-qudit-depolarizing-bounds} compares the cubic-norm and PPT-relative-entropy lower bounds with the entanglement of formation at $d=3,10,60$. In these examples, the cubic-norm bound substantially improves the PPT benchmark and closely follows the upper bound.

\begin{figure}[!ht]
\centering
\includegraphics[width=\textwidth]{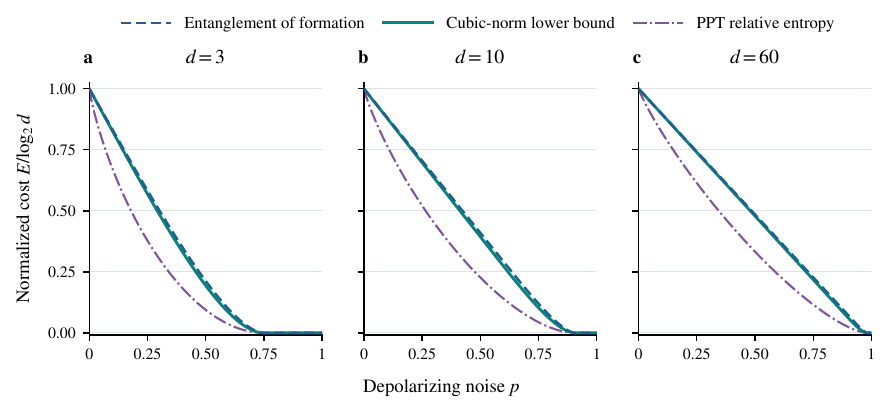}
\caption{Entanglement-cost bounds for isotropic states at $d=3,10,60$, normalized by $\log d$. The solid teal curve is the cubic-norm lower bound $B_d(1-2\eta_dp)$, the purple dash-dotted curve is the PPT-relative-entropy lower bound $R_d(p)$, and the blue dashed curve is the entanglement of formation $U_d(p)$.}
\label{fig:main-qudit-depolarizing-bounds}
\end{figure}

\paragraph{Remaining gap.}
To quantify the uncertainty in the entanglement cost uniformly over the noise parameter, define the maximum interval width
\begin{align}
G_d&:=\max_{0\leq p\leq1}\bigl[U_d(p)-B_d(1-2\eta_dp)\bigr].
\label{eq:main-cubic-maximum-gap}
\end{align}
Both $U_d(p)$ and $B_d(1-2\eta_dp)$ vanish for $p\geq p_{\mathrm H,d}$, so only the entangled regime contributes to $G_d$.
Figure~\ref{fig:main-qudit-max-gap-dimension} plots the normalized interval width $G_d/\log d$ for every integer $3\leq d\leq10^4$. The largest value in this scan is approximately $1.989\%$, attained at $d=8$. Thus the gap between the cubic-norm lower bound and the entanglement of formation remains below $2\%$ of $\log d$ throughout the scanned range. Appendix~\ref{sec:refined-asymptotic} further proves that $G_d\to0$ as $d\to\infty$, so the absolute gap vanishes uniformly over the noise parameter.

\begin{figure}[H]
\centering
\includegraphics[width=\textwidth]{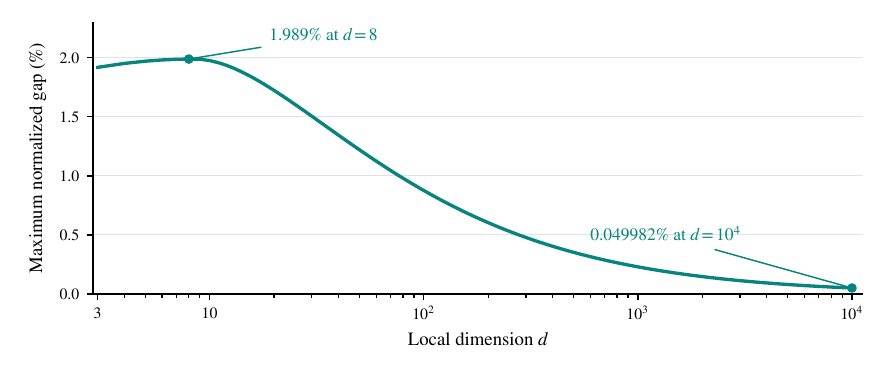}
\caption{Maximum interval width $G_d$, defined in Eq.~\eqref{eq:main-cubic-maximum-gap}, normalized by $\log d$ and expressed as a percentage. The scan includes every integer $3\leq d\leq10^4$, shown on a logarithmic dimension axis. At $d=10^4$, the normalized width is approximately $0.049982\%$.}
\label{fig:main-qudit-max-gap-dimension}
\end{figure}
\FloatBarrier

\section{Conclusion}
\label{sec:main-conclusion}

We have determined the entanglement cost of every qubit isotropic state and, more generally, every two-qubit Bell-diagonal state. For qudit isotropic states, the cubic-norm bound substantially improves the PPT-relative-entropy benchmark and nearly coincides the entanglement of formation. The remaining interval vanishes in absolute width with increasing dimension, uniformly over the noise parameter. The exact entanglement cost for qudit isotropic state remains open.

These results also characterize the entanglement resources needed to simulate depolarizing channels. Channel entanglement cost is the minimum asymptotic Bell-pair rate required for simulation by local operations and classical communication. For depolarizing channels, both parallel and adaptive sequential simulation costs equal the entanglement cost of the corresponding isotropic Choi state~\cite[Theorem~1 and Section~IV]{Wilde2018}. Our results therefore determine the simulation cost exactly for qubit depolarizing channels and give the same narrow cost intervals for qudit channels.

\bigskip
\paragraph{Acknowledgements.}
K.F. is supported in part by the National Natural Science Foundation of China (Grant Nos. 92470113 and 12404569), the Shenzhen Science and Technology Program (Grant Nos. QNXMB20250701091826036 and JCYJ20240813113519025), the Shenzhen Fundamental Research Program (Grant No. JCYJ20241202124023031), the General R\&D Projects of 1+1+1 CUHK-CUHK(SZ)-GDST Joint Collaboration Fund (Grant No. GRDP2025-022), the Guangdong Provincial Quantum Science Strategic Initiative (Grant No. GDZX2503001), and the University Development Fund (Grant No. UDF01003565). 
We acknowledge that OpenAI GPT-6 Astra was used to explore proof strategies, plot figures and support the writing of the manuscript. We
have verified AI-assisted material to the best of our knowledge
and take full responsibility for the content of this work.

\begingroup
\normalsize
\bibliographystyle{alpha_abbrv}
\bibliography{tensorized_bell_fidelity_qubit_cost}

\newcommand{\etalchar}[1]{$^{#1}$}
\begin{thebibliography}{VADDM01}

\bibitem[BDR20]{BeigiDattaRouze2020}
Salman Beigi, Nilanjana Datta, and Cambyse Rouz{\'e}.
\newblock Quantum reverse hypercontractivity: Its tensorization and application to strong converses.
\newblock {\em Communications in Mathematical Physics}, 376(2):753--794, 2020.

\bibitem[Bei13]{Beigi2013}
Salman Beigi.
\newblock Sandwiched {R{\'e}nyi} divergence satisfies data processing inequality.
\newblock {\em Journal of Mathematical Physics}, 54(12):122202, 2013.

\bibitem[DGO{\etalchar{+}}26]{DongEtAl2026}
Yangjing Dong, Li~Gao, Fengning Ou, Penghui Yao, and Haigang Zhou.
\newblock Dimension-free approximate tensorization of quantum hypercontractivity for qudit depolarizing semigroups.
\newblock {\em arXiv preprint arXiv:2606.17729}, 2026.
\newblock arXiv:2606.17729v3.

\bibitem[HHT01]{HaydenHorodeckiTerhal2001}
Patrick~M. Hayden, Micha\l{} Horodecki, and Barbara~M. Terhal.
\newblock The asymptotic entanglement cost of preparing a quantum state.
\newblock {\em Journal of Physics A: Mathematical and General}, 34:6891--6898, 2001.

\bibitem[Kin14]{King2014}
Christopher King.
\newblock Hypercontractivity for semigroups of unital qubit channels.
\newblock {\em Communications in Mathematical Physics}, 328(1):285--301, 2014.

\bibitem[Rai99]{Rains1999}
Eric~M. Rains.
\newblock Bound on distillable entanglement.
\newblock {\em Physical Review A}, 60:179, 1999.
\newblock Erratum: Physical Review A 63, 019902 (2000).

\bibitem[TV00]{TerhalVollbrecht2000}
Barbara~M. Terhal and Karl Gerd~H. Vollbrecht.
\newblock Entanglement of formation for isotropic states.
\newblock {\em Physical Review Letters}, 85:2625--2628, 2000.

\bibitem[VADDM01]{VerstraeteEtAl2001}
Frank Verstraete, Koenraad Audenaert, Jeroen Dehaene, and Bart De~Moor.
\newblock A comparison of the entanglement measures negativity and concurrence.
\newblock {\em Journal of Physics A: Mathematical and General}, 34:10327--10332, 2001.

\bibitem[Wil18]{Wilde2018}
Mark~M. Wilde.
\newblock Entanglement cost and quantum channel simulation.
\newblock {\em Physical Review A}, 98:042338, 2018.

\bibitem[Woo98]{Wootters1998}
William~K. Wootters.
\newblock Entanglement of formation of an arbitrary state of two qubits.
\newblock {\em Physical Review Letters}, 80:2245--2248, 1998.

\end{thebibliography}
\endgroup

% \clearpage
\appendix
\section{Calculations for the tuned qubit semigroup}
\label{app:tuned-qubit-calculations}

We supply the calculations for the qubit semigroup constructed in the proof of Lemma~\ref{lem:main-supporting-line-entropy}. We first derive the matching identity, then verify the semigroup hypotheses and sharp log-Sobolev constant. Throughout, we use the parameters and operators defined in that proof.

\subsection{Dirichlet form and algebraic matching}
\label{app:tuned-qubit-matching}

For an arbitrary complex $2\times2$ matrix $M$, put $X=\sigma_c^{-1/4}M\sigma_c^{-1/4}$. Substituting the generator~\eqref{eq:main-tuned-generator} into the weighted inner product gives
\begin{align}
\langle X,\cL_c(X)\rangle_{\sigma_c}
&=\left|\sqrt{\nu_c}M_{00}-\sqrt{\mu_c}M_{11}\right|^2
+\eta_c\left(|M_{01}|^2+|M_{10}|^2\right).
\label{eq:main-tuned-general-matrix-energy}
\end{align}
The reference-state term is equally explicit:
\begin{align}
-\frac12\tr\!\left[(M^\dagger M+MM^\dagger)\ln\sigma_c\right]
&=-\ln\mu_c\,|M_{00}|^2-\ln\nu_c\,|M_{11}|^2
\nonumber\\
&\quad-\frac12\ln(\mu_c\nu_c)\left(|M_{01}|^2+|M_{10}|^2\right).
\label{eq:main-tuned-general-reference-term}
\end{align}

For $n$ pairs and a fixed site $i$, group the remaining row and column indices into $\boldsymbol\alpha$ and $\boldsymbol\beta$, and write $M$ as a $2\times2$ block matrix with blocks $M_{ab}^{(i)}=(M_{a\boldsymbol\alpha,b\boldsymbol\beta})_{\boldsymbol\alpha,\boldsymbol\beta}$. Taking the partial trace of $\psi$ gives
\begin{align}
p_{ab}^{(i)}&=\lVert M_{ab}^{(i)}\rVert_2^2,
&z_i&=\tr\!\left[M_{00}^{(i)}(M_{11}^{(i)})^\dagger\right].
\label{eq:main-coefficient-block-marginals}
\end{align}
Let $\tau_c:=\sigma_c^{\otimes(n-1)}$, $s_0:=\mu_c$, and $s_1:=\nu_c$. After putting site $i$ first, the blocks of $X$ are
\begin{align}
X_{ab}^{(i)}&=(s_as_b)^{-1/4}\tau_c^{-1/4}M_{ab}^{(i)}\tau_c^{-1/4}.
\end{align}
The product weights cancel on the other sites, so the blocks of $\Gamma_{\Sigma_c}^{1/2}(\cL_{c,i}(X))$ are
\begin{align}
\begin{pmatrix}
\nu_c M_{00}^{(i)}-\sqrt{\mu_c\nu_c}M_{11}^{(i)} & \eta_c M_{01}^{(i)}\\
\eta_c M_{10}^{(i)} & \mu_c M_{11}^{(i)}-\sqrt{\mu_c\nu_c}M_{00}^{(i)}
\end{pmatrix}.
\end{align}
Using $\langle X,\cL_{c,i}(X)\rangle_{\Sigma_c}=\tr[M^\dagger\Gamma_{\Sigma_c}^{1/2}(\cL_{c,i}(X))]$ therefore gives
\begin{align}
\langle X,\cL_{c,i}(X)\rangle_{\Sigma_c}
&=\left\lVert\sqrt{\nu_c}M_{00}^{(i)}-\sqrt{\mu_c}M_{11}^{(i)}\right\rVert_2^2
+\eta_c\left(\lVert M_{01}^{(i)}\rVert_2^2+\lVert M_{10}^{(i)}\rVert_2^2\right).
\label{eq:main-tuned-block-energy}
\end{align}
Summing over $i$ proves Eq.~\eqref{eq:main-common-tangent-energy}. Likewise,
\begin{align}
\ln\Sigma_c
&=\sum_{i=1}^n\one_2^{\otimes(i-1)}\otimes(\ln\sigma_c)\otimes\one_2^{\otimes(n-i)}.
\label{eq:main-product-log-reference}
\end{align}
Each summand acts on the $i$th qubit and as the identity on the other qubits in the same register. Applying the one-site reference formula to the blocks proves Eq.~\eqref{eq:main-common-tangent-reference}. No factorization of $M$ is used.

It remains to match coefficients. We choose the rates so that the reference-state term minus the Dirichlet form divided by $a_c$ depends only on $\lVert M\rVert_2^2$ and $|M_{00}+M_{11}|^2$. The coefficients of $|M_{00}|^2$ and $|M_{11}|^2$ must agree, and their difference from the coefficient of $|M_{01}|^2$ must equal half the coefficient of $\operatorname{Re}(\overline{M_{00}}M_{11})$. These conditions are
\begin{align}
\ln\frac{\nu_c}{\mu_c}
&=\frac{\nu_c-\mu_c}{a_c},
&\frac{\eta_c-1/2}{a_c}
&=\frac{\sqrt{\mu_c\nu_c}}{a_c}.
\label{eq:main-tuned-matching-conditions}
\end{align}
For $0<c<1$, they give precisely Eq.~\eqref{eq:main-tuned-rates}. Substitution then gives
\begin{align}
-\ln\mu_c-\frac{\nu_c}{a_c}
&=-\ln\nu_c-\frac{\mu_c}{a_c}
=(\ln2)\left[\W(c)-c\W'(c)\right],
\nonumber\\
-\frac12\ln(\mu_c\nu_c)-\frac{\eta_c}{a_c}
&=(\ln2)\left[\W(c)-(c+1)\W'(c)\right],\\
\frac{\sqrt{\mu_c\nu_c}}{a_c}
&=(\ln2)\W'(c).
\label{eq:main-tuned-entry-identities}
\end{align}
These identities give the supporting-line expression in Eq.~\eqref{eq:main-common-tangent-entropy}. They remain valid at $c=1$ with $a_1=1/2$ and $\eta_1=1$.

The slope is fixed by $\sqrt{\mu_c\nu_c}/a_c=(\ln2)\W'(c)$, and the contact state determines the intercept. Indeed, at $M=\sqrt{\sigma_c}$ the Dirichlet form vanishes, while the reference-state term equals $-\tr(\sigma_c\ln\sigma_c)=(\ln2)\W(c)$.

\subsection{Semigroup properties and the sharp log-Sobolev constant}
\label{app:tuned-qubit-lsi}

With the positive-generator convention, $\cL_c$ has the Lindblad representation
\begin{align}
\cL_c(X)
&=
\sum_{k=1}^3
\left[
\frac12\{V_k^\dagger V_k,X\}
-V_k^\dagger XV_k
\right],
\nonumber\\
V_1&:=\sqrt{\nu_c}\ketbra{1}{0},
\qquad
V_2:=\sqrt{\mu_c}\ketbra{0}{1},
\qquad
V_3:=\frac{\sqrt c}{2}Z.
\label{eq:main-tuned-lindblad}
\end{align}
Hence $\Psi_{c,t}=e^{-t\cL_c}$ is completely positive and unital. Direct inspection shows that $\cL_c$ is $\sigma_c$-reversible, preserves the diagonal algebra, and acts on both off-diagonal matrix units with eigenvalue $\eta_c$. Thus $\sigma_c$ is invariant, and each map $\Psi_{c,t}$ is PODS. Since $\mu_c,\nu_c,\eta_c>0$, the diagonal dynamics has a unique invariant state and all coherences decay, proving primitivity.

We now prove $\alpha_2(\cL_c)=a_c$. Write $\delta_c:=\nu_c-\mu_c=\sqrt{1-c^2}$. Let $X>0$, normalize $\lVert X\rVert_{\sigma_c,2}=1$, and put
\begin{align}
M
&:=\sigma_c^{1/4}X\sigma_c^{1/4}
=\frac12\left(s\one+t\boldsymbol n\cdot\boldsymbol\sigma\right),
\label{eq:main-tuned-bloch}
\end{align}
where $s>t\geq0$, $\lVert\boldsymbol n\rVert=1$, and $s^2+t^2=2$. Write $u:=n_z$. The normalization gives $\tr M^2=1$, and therefore $\operatorname{Ent}_{2,\sigma_c}(X)=D(M^2\Vert\sigma_c)$. A direct calculation then gives
\begin{align}
\cE_{2,\cL_c}(X)
&=
\frac{(1-c)s^2+(1+c)t^2}{4}
+\frac{\delta_cst}{2}u,
\label{eq:main-tuned-energy-orientation}\\
D(M^2\Vert\sigma_c)
&=
D_{0,c}(s,t)
+\frac{st}{2}u\ln\frac{\nu_c}{\mu_c},
\label{eq:main-tuned-entropy-orientation}\\
D_{0,c}(s,t)
&:=
\sum_{\pm}
\frac{(s\pm t)^2}{4}
\ln\frac{(s\pm t)^2}{4}
-\frac12\ln(\mu_c\nu_c).
\label{eq:main-tuned-entropy-isotropic-part}
\end{align}
Combining these identities gives
\begin{align}
\cE_{2,\cL_c}(X)
-a_cD(M^2\Vert\sigma_c)
&=
\frac{(1-c)s^2+(1+c)t^2}{4}
-a_cD_{0,c}(s,t).
\label{eq:main-tuned-orientation-independent}
\end{align}
The terms proportional to $u$ cancel because
\begin{align}
a_c\ln\frac{\nu_c}{\mu_c}
&=\delta_c.
\label{eq:main-tuned-cancellation}
\end{align}
Thus the difference in Eq.~\eqref{eq:main-tuned-orientation-independent} depends only on the eigenvalues of $M$. We may evaluate it at $u=1$, where $M$ and $X$ are diagonal. On this algebra, $\cL_c$ is the classical two-point generator with invariant probabilities $(\mu_c,\nu_c)$ and sharp $2$-log-Sobolev constant
\begin{align}
\frac{\nu_c-\mu_c}{\ln\nu_c-\ln\mu_c}
&=a_c.
\label{eq:main-classical-two-point-alpha}
\end{align}
The classical inequality therefore proves $\alpha_2(\cL_c)\geq a_c$. Restricting to diagonal $X$ gives the reverse inequality and hence equality. The endpoint $c=1$ follows by continuity: the displayed expressions extend with $a_1=1/2$, which is also the sharp constant on the diagonal algebra.

\section{Uniform large-dimension estimate}
\label{sec:refined-asymptotic}

This appendix gives an analytical estimate for the maximum interval width
studied numerically in Figure~\ref{fig:main-qudit-max-gap-dimension}.
Although conservative in finite dimensions, it proves that the width
vanishes in absolute ebits, uniformly over the entire noise range.

\begin{corollary}
\label{cor:refined-vanishing-gap}
Let $U_d(p):=\Ef(\rho_{p,d})$ and let $B_d$ be the cubic-norm
lower bound in Eq.~\eqref{eq:refined-cubic-profile}. Write
\begin{align}
G_d&:=\max_{0\leq p\leq1}
\bigl\{U_d(p)-B_d(1-2\eta_dp)\bigr\}.
\end{align}
For every integer $d\geq3$,
\begin{align}
0\leq G_d
&\leq \log d\left[1-\frac{d\log d}{(d-1)s_d}\right],\quad s_d:=3\log\!\left[(d-1)^{1/3}+(d-1)^{-1/3}\right].
\label{eq:refined-uniform-gap-cap}
\end{align}
In particular,
\begin{align}
\limsup_{d\to\infty}d^{2/3}G_d
&\leq\frac3{\ln2},
&
\frac{G_d}{\log d}&=O\!\left(\frac{d^{-2/3}}{\log d}\right).
\label{eq:refined-vanishing-gap-rate}
\end{align}
\end{corollary}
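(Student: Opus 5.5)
The plan is to replace both curves by piecewise-linear comparison functions and to compute the gap between those explicitly. For the lower curve we keep only the contractive endpoint $r=r_{0,d}$ in Eq.~\eqref{eq:refined-cubic-profile}, where $K_d(r_{0,d})=1$ and $\log r_{0,d}=-\tfrac12\log u_d$. This gives $\ell_{d,r_{0,d}}(c)=\log d-\tfrac{1-c}{2}s_d$. Combining it with $B_d\ge\ell_{d,1}=0$ and substituting $c=1-2\eta_dp$ yields
\begin{align}
B_d(1-2\eta_dp)\ge\max\{0,\ \log d-\eta_d s_d\,p\}.
\end{align}
For the upper curve we use convexity of $U_d$ in $p$. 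This holds because the isotropic twirl is an LOCC map that averages decompositions, so $\Ef$ restricted to the isotropic line is convex; equivalently, it can be read off from Eq.~\eqref{eq:main-isotropic-eof}. Since $U_d(0)=\log d$ and $U_d(p_{\mathrm H,d})=0$, convexity gives the chord bound $U_d(p)\le\log d\,(1-p/p_{\mathrm H,d})$ on $[0,p_{\mathrm H,d}]$, and $U_d=0$ beyond that point.

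The gap is then bounded by a difference of piecewise-linear functions, which kinks only at $p_*:=\log d/(\eta_d s_d)$. On $[0,p_*]$ the difference equals $p\,(\eta_d s_d-\log d/p_{\mathrm H,d})$. On $[p_*,p_{\mathrm H,d}]$ it equals $\log d\,(1-p/p_{\mathrm H,d})$, which is decreasing. Provided the first slope is nonnegative, i.e.\ $s_d\ge\frac{d}{d-1}\log d$, the maximum is attained at $p_*$. Using $\eta_d p_{\mathrm H,d}=(d-1)/d$, its value is
\begin{align}
\log d\Bigl[1-\frac{p_*}{p_{\mathrm H,d}}\Bigr]=\log d\Bigl[1-\frac{d\log d}{(d-1)s_d}\Bigr],
\end{align}
which is exactly Eq.~\eqref{eq:refined-uniform-gap-cap}. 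The lower bound $G_d\ge0$ is immediate: at $p=0$ the third branch of Proposition~\ref{prop:refined-cubic-explicit} gives $B_d(1)=\log d=U_d(0)$, so the gap vanishes there. Alternatively, $B_d(1-2\eta_dp)\le\Ec(\rho_{p,d})\le U_d(p)$ makes every term of the maximum nonnegative.

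The step I expect to need the most care is the slope condition $s_d\ge\frac{d}{d-1}\log d$. It is equivalent to
\begin{align}
(d-1)+3(d-1)^{1/3}+3(d-1)^{-1/3}+(d-1)^{-1}\ge d^{d/(d-1)}=d\cdot d^{1/(d-1)}.
\end{align}
I would prove it by bounding $d^{1/(d-1)}\le1+\frac{\ln d}{d-1}\,e^{\ln d/(d-1)}$. That reduces the claim to $3(d-1)^{1/3}\gtrsim1+O(\ln d)$, which holds comfortably for large $d$. The finitely many small cases would be checked directly; for example, at $d=3$ one has $3.11>2.38$. If the condition ever failed, the difference would be nonpositive on the whole interval and the argument would need a separate treatment.

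Finally, the asymptotic rate follows from an expansion. We have $s_d=\log(d-1)+3\log\bigl(1+(d-1)^{-2/3}\bigr)=\log d+\frac{3}{\ln2}d^{-2/3}+O(d^{-1})$. Writing the cap as $\frac{\log d}{s_d}\bigl[s_d-\log d-\frac{\log d}{d-1}\bigr]$, the prefactor tends to $1$ and the bracket is $\frac{3}{\ln2}d^{-2/3}+O(d^{-1}\log d)$. This gives $\limsup_{d\to\infty}d^{2/3}G_d\le3/\ln2$, and dividing by $\log d$ gives the stated $O\bigl(d^{-2/3}/\log d\bigr)$ bound.
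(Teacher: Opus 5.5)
Your proposal is correct and follows the paper's argument essentially step for step. The shared steps are the chord bound $U_d(p)\le\log d\,(1-p/p_{\mathrm H,d})$ from convexity, the clipped affine bound $B_d(1-2\eta_dp)\ge[\log d-\eta_ds_dp]_+$ from the contractive endpoint $r_{0,d}$, the maximum at the kink $p_*=\log d/(\eta_ds_d)$, and the expansion $s_d-\frac{d}{d-1}\log d=\frac{3}{\ln2}d^{-2/3}+o(d^{-2/3})$.

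The only difference is the slope condition $s_d\ge\frac{d}{d-1}\log d$. The paper gets it for free from the validity of the lower bound at the separable point, since $\log d-\eta_ds_dp_{\mathrm H,d}\le B_d(1-2\eta_dp_{\mathrm H,d})\le\Ec(\rho_{p_{\mathrm H,d},d})=0$. Your direct analytic verification is also correct, but it is not needed.
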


\begin{proof}
Put $A=\log d$ and $a=dA/(d-1)$. Convexity of isotropic entanglement of
formation, together with its endpoint values $U_d(0)=A$ and $U_d(p_{\mathrm H,d})=0$,
gives the chord bound
\begin{align}
U_d(p)&\leq A-a\eta_dp,\qquad 0\leq p\leq p_{\mathrm H,d}.
\end{align}
The global affine bound in Proposition~\ref{prop:refined-cubic-explicit} gives
$B_d(1-2\eta_dp)\geq[A-s_d\eta_dp]_+$.
Validity of this lower bound at the separable endpoint implies $s_d\geq a$;
hence its clipping point $p_0=A/(s_d\eta_d)$ belongs to $[0,p_{\mathrm H,d}]$.
The difference between the chord and this clipped affine function increases
up to $p_0$ and decreases thereafter. Its maximum is therefore
\begin{align}
A-a\eta_dp_0&=A\left(1-\frac a{s_d}\right),
\end{align}
which proves Eq.~\eqref{eq:refined-uniform-gap-cap}. For $p\geq p_{\mathrm H,d}$, both
endpoints of the certified interval are zero.
Finally,
\begin{align}
s_d&=\log(d-1)+3\log\!\left[1+(d-1)^{-2/3}\right],
\\
s_d-a&=\frac3{\ln2}\,d^{-2/3}+o(d^{-2/3}),
\qquad \frac{A}{s_d}\to1.
\end{align}
Substitution gives Eq.~\eqref{eq:refined-vanishing-gap-rate}.
\end{proof}

\end{document}